\title[ ]{Upper bounds on transport exponents for long range  operators}
\author{ Svetlana Jitomirskaya}
\address[ Svetlana Jitomirskaya]{ Department of Mathematics, University of California, Irvine, California 92697-3875, USA}
\email{szhitomi@math.uci.edu}
\author{Wencai Liu}
\address[Wencai Liu]{ Department of Mathematics, University of California, Irvine, California 92697-3875, USA}
\address{ Current address: Department of Mathematics, Texas A\&M University, College Station, TX 77843-3368, USA} \email{liuwencai1226@gmail.com; wencail@tamu.edu}
\newcommand{\R}{\mathbb{R}}
\newcommand{\Z}{\mathbb{Z}}
\newcommand{\C}{\mathbb{C}}
\newcommand{\T}{\mathbb{T}}
\theoremstyle{plain}
\newtheorem{theorem}{Theorem}[section]
\newtheorem{corollary}[theorem]{Corollary}
\newtheorem{lemma}[theorem]{Lemma}
\theoremstyle{definition}
\newtheorem{remark}[theorem]{Remark}
\begin{document}


\begin{abstract}
We present a simple method,  not based on the transfer
matrices, to prove vanishing of dynamical transport
exponents. The method is applied to long range quasiperiodic operators.

\end{abstract}
\maketitle

 \section{Introduction}

Jean Bourgain, partially with collaborators, has developed a powerful method to
prove Anderson localization for ergodic Schr\"odinger operators, see
\cite{bbook} and references therein. The method relies heavily, in both perturbative and nonperturbative
settings, on the subharmonic function theory and the
theory of semi-algebraic sets and has turned out to be quite robust. While the precursor was the
non-perturbative approach of \cite{j} that initiated the emphasis on
obtaining off-diagonal  Green's function decay using bulk features
rather than individual eigenfunctions, Bourgain's method has crystallized
and developed the key ideas that did not require transfer matrices/nearest-neighbor Laplacians, thus allowing, in particular, the extension to Toeplitz
matrices as well as multidimensional localization
results.\footnote{See also \cite{gafa,liu} for streamlining and simplification of Bourgain's
multidimensional method and the
non-self-adjoint version.}

Discrete quasiperiodic operators with the Laplacian replaced by a
Toeplitz operator appear naturally in the context of Aubry duality,
and have been studied by several authors. Let
$H_{\theta,\alpha,\epsilon},$ with
$(\theta,\alpha)\in \T^2,$
act on $\ell^2(\Z)$ by 

\begin{equation}\label{bop}
(H_{\theta,\alpha,\epsilon}u)_n :=\epsilon \left( \sum_{k\in \Z}a_{n-k}u_{k}\right)+ v (\theta +n\alpha) u_n.
\end{equation}
where  $|a_n|\leq A_1 e^{-a|n|}$  for some $a,A_1>0$ and
$a_{-n}=\overline{a_{n}}$. Bourgain's main localization result for
the long-range case is
\begin{theorem}[\cite{bbook}, Theorem 11.20]\label{btheorem}
If $v$ is analytic non-constant on $\T$, then for
$|\epsilon|\leq \epsilon_0,\;\epsilon_0=\epsilon_0(A_1, a,v),$ $H_{\theta,\alpha,\epsilon}$ satisfies
Anderson localization for a full measure set of $(\theta,\alpha)\in
\T^2.$
\end{theorem}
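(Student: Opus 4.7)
The plan is to prove Anderson localization via a multi-scale analysis on Green's functions of finite-box restrictions of $H_{\theta,\alpha,\epsilon}$, bypassing transfer matrices (which are in any case unavailable for the long-range operator \eqref{bop}). The starting point is that at $\epsilon = 0$ the operator is diagonal, with eigenfunctions $\delta_n$ and eigenvalues $v(\theta + n\alpha)$; the exponential decay $|a_n| \leq A_1 e^{-a|n|}$ of the off-diagonal entries, combined with smallness of $\epsilon$, should allow one to establish Green's function decay at the first scale by a direct Neumann series argument, and then propagate it to all scales via the resolvent identity.

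After excluding a null set of $\alpha$ not satisfying a suitable Diophantine condition, I would set up a multi-scale induction on scales $N_{k+1} \sim N_k^C$. The inductive hypothesis at scale $N_k$ reads: for each energy $E$ in a relevant compact interval and for each $n_0 \in \Z$, either the box $\Lambda = [n_0 - N_k, n_0 + N_k]$ is \emph{good}, in the sense that $|G_\Lambda(E)(n,n')| \leq e^{-\gamma |n-n'|}$ for $|n-n'| \gtrsim N_k$, or else $n_0$ lies in a sparse \emph{bad} set. The inductive step from $N_k$ to $N_{k+1}$ then amounts to covering a box of size $N_{k+1}$ by good sub-boxes of size $N_k$ and telescoping through a resolvent identity, with the exponential decay of the $a_n$ absorbing the errors at the interfaces.

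Two ingredients control the bad set. First, a Lojasiewicz-type estimate for the analytic non-constant function $v$ yields $|\{\theta \in \T : |v(\theta) - E| < \delta\}| \leq C\delta^{1/d}$ for some integer $d = d(v)$; combined with the Diophantine condition on $\alpha$, this gives density and spacing bounds on resonant sites in $\Z$. Second, the semi-algebraic set machinery of \cite{bbook} is used to track the complexity of the set of parameters $(\theta, E)$ that produce bad boxes at each scale, so that measure estimates can be iterated.

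The main obstacle, in my view, is the elimination of \emph{double resonances} at every scale: ruling out that two bad boxes of size $N_k$ occur within distance $N_{k+1}$ of one another for the same energy $E$. A naive union bound loses too much, because the bad events at different translates are strongly correlated through the quasi-periodic structure. The resolution is a quantitative estimate on slices of semi-algebraic sets: one shows that the set of $(\theta, E)$ admitting a double resonance at scale $k$ is semi-algebraic of controlled degree, so its measure decays fast enough to be summable. Borel--Cantelli then produces a full measure set of $(\theta, \alpha) \in \T^2$ on which the induction runs at all scales, delivering Green's function decay everywhere and, by a standard Shnol-type argument, exponential decay of every polynomially bounded generalized eigenfunction, i.e.\ Anderson localization.
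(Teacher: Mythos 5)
The paper does not prove Theorem~\ref{btheorem} at all---it is quoted from Chapter 11 of \cite{bbook}; the only ingredient the paper actually uses is the sublinear bound \eqref{sublinear} on bad boxes. So your sketch must be measured against Bourgain's original argument, and there it has a genuine gap.

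Your outline identifies the correct scaffolding (multi-scale analysis, Diophantine $\alpha$, Łojasiewicz for analytic $v$, semi-algebraic complexity bounds, double-resonance elimination, Shnol), but it leaves out the single ingredient that makes the argument \emph{non-perturbative} and that produces the quantitative sublinear count of bad boxes: the large deviation estimate for $\log |\det(R_\Lambda (H_{\theta,\alpha,\epsilon} - E) R_\Lambda)|$ as a function of $\theta$, proved via subharmonicity after complexification, Cartan/Riesz representation estimates, and unique ergodicity of the rotation. You instead propose a Neumann-series estimate at the first scale ``propagated by the resolvent identity,'' and then attribute the smallness of the bad set to the semi-algebraic machinery. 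That conflates two separate things. The semi-algebraic theory controls \emph{complexity}---number of connected components, degrees, and hence how a sub-level set in $(\theta,E)$ intersects an orbit segment $\{\theta + n\alpha\}_{|n|\le N}$---but it gives no measure bound by itself. The sub-exponential measure bound $e^{-N^\sigma}$ on the bad $\theta$-set at scale $N$ is exactly what the LDT supplies, and without it a resolvent-identity induction degenerates into a Fröhlich--Spencer-style perturbative argument in which $\epsilon_0$ inevitably depends on the Diophantine constants of $\alpha$, contradicting the statement that $\epsilon_0 = \epsilon_0(A_1, a, v)$. Your diagnosis of the double-resonance step (that ``the bad events are strongly correlated'') also points at the wrong obstruction: the quasi-periodic correlation is precisely what the subharmonicity/ergodicity argument exploits, and the actual issue is converting an LDT-type measure bound into an orbit count via the semi-algebraic slice lemma, not defeating correlation. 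To repair the proposal you would need to insert, at the heart of the inductive step, the matrix-valued Cartan estimate and the resulting LDT at all scales for all Diophantine $\alpha$, and only then feed the output into the semi-algebraic elimination and the Shnol argument.
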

We note that this theorem is {\it non-perturbative}, that is
$\epsilon_0$ does not depend on $\alpha.$ There is also a stronger, {\it arithmetic} (that is with an arithmetic
full measure condition on the frequency and phase) localization result for
$v(\theta)=\cos 2\pi \theta$ \cite{bj}, and recently an arithmetic multidimensional
result was obtained as a corollary of dual quantitative reducibility in
\cite{gy}, but for general function $v$ Bourgain's non-arithmetic theorem \ref{btheorem}
remains the strongest available. We note that the perturbative multidimensional
version appears in \cite{gafa}; however, in the multidimensional
case  there is no essential difference between the nearest neighbor and
long-range Laplacians.

At the same time, Anderson localization (pure point spectrum with
exponentially decaying eigenfunctions) is extremely fragile.  Indeed, it was shown by
Gordon \cite{Gor} and del Rio, Makarov and Simon \cite{DMS} that a
generic rank one perturbation of an operator with an interval in the spectrum
even in the regime of {\it dynamical} localization leads to singular continuous
spectrum, and therefore, by the RAGE theorem, growth of the moments. However, it was
shown in \cite{DJLS} that, under the condition of SULE, present in
many models, this growth can be at most logarithmic, and
thus preserves vanishing of the dynamical exponents.  Thus one can
argue that it is vanishing of the dynamical exponents $\beta (q)$ (see
(\ref{dyn-exp}) for the definition) that captures the physically relevant
effect of localization. 

Indeed, such localization-type results (vanishing of $\beta (q)$) have
been obtained, in increasing generality, for random and quasiperiodic operators
as a corollary of positive Lyapunov exponents in \cite{7,8,jsb, 15}, with \cite{rui}
covering the entire class of ergodic operators with base dynamics of zero topological
entropy, a class that includes shifts and skew-shifts on
higher-dimensional tori. 
Clearly, those techniques are transfer-matrix
based, thus don't extend to long-range operators.

In this note we present a very simple method to obtain such quantum dynamical
upper bounds for the long-range case, and show that one part
of Bourgain's localization proof can serve as an input to obtain an
arithmetic result: vanishing of
quantum dynamical exponents for all long-range quasiperiodic operators
with Diophantine frequencies, all phases, and sufficiently large analytic
potentials, see Corollary \ref{thmap1}. 
This should be contrasted with the
non-arithmetic result Theorem \ref{btheorem}. We note also that Anderson localization for
{\it all} phases does not even hold \cite{js,jl2}

Bourgain's method consists of multiple parts, and the one in question is establishing the sublinear
bound (\ref{sublinear})  for the number of boxes of size $N^c$ in a
box of size $N,$ that don't have the off-diagonal Green
function decay. Our method requires only presence
of {\it one} box of size $N^c$  {\it with} the off-diagonal Green
function decay, in a box of size $N,$ thus Bourgain's sublinear bound is even an
overkill for a needed input.

We note that, while suitable for long-range, our method is still
one-dimensional, as only in dimension one does one box create a
barrier and thus a good estimate for the Green's
function in a bigger box. Yet it does provide the first departure from the Lyapunov
exponent/transfer matrix based methods, and leads to a strong
corollary. Also, it extends easily to the (not necessarily
uniform) band, requiring only  one ``good box'' to apply Theorem \ref{mainthm2}.

Let us now introduce the main concepts. We restrict here to dimension
one, although many of the statements and definitions are easily
extendable to higher dimensions.  For a fixed self-adjoint operator
$H$ on $\ell^2(\Z), $ $\phi\in \ell^2(\Z)$ and $p, T > 0$,  let
\begin{equation}\label{mom}
    \langle|X|_{\phi}^p\rangle(T)=\frac{2}{T}\int_0^\infty e^{-2t/T}\sum_{n\in\Z}|n|^p|(e^{- itH}\phi,\delta_n)|^2
\end{equation}
The growth rate of $\langle|X|_{\phi}^p\rangle(T)$ characterizes how
fast does $e^{- itH}\phi$ spread out. The power law bounds for
$\langle|X|_{\phi}^p\rangle(T)$ are naturally characterized by the
following upper transport  exponents  $\beta_{\phi}^+(p)$ defined  as
\begin{equation}\label{dyn-exp}
    \beta_{\phi}^+(p)=\limsup_{T\to\infty}\frac{\ln\langle|X|_{\phi}^p\rangle(T)}{p\ln T}.
\end{equation}

Here we study 
Schr\"odinger operators on $\ell^2(\Z)$ of the form,
\begin{equation*}
H=A+V,
\end{equation*}
where  $V=\{V_n\}_{n\in\Z}$ is  real bounded  and $A$ is a  long-range  operator of the form
\begin{equation*}
(Au)_n=\sum _{k\in\Z} a_{n-k} u_k,
\end{equation*}
where $|a_n|\leq A_1 e^{-a|n|}$  for some $a,A_1>0$ and $a_{-n}=\overline{a_{n}}$.

More precisely, 
\begin{equation}\label{g01}
(Hu)_n = \left( \sum_{k\in \Z}a_{k}u_{n-k}\right)+V_n u_n.
\end{equation}
Just like Schr\"odinger operators, such operators admit a ballistic
bound on the transport exponents
\begin{theorem}\label{mainthm1}
	Let $H$ be given by \eqref{g01}.
Assume $\phi$ is compactly supported.
Then the  upper  transport exponent
	$\beta_{\phi}^+(q)\leq1$ for any $q>0$.
\end{theorem}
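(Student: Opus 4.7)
The plan is to prove a deterministic Combes--Thomas-type bound on the matrix entries $(e^{-itH})_{mn}$ of the unitary group and then integrate. Because $a_n$ decays exponentially, we expect a finite ``group velocity'': if $|m-n|\gg |t|$, then $|(e^{-itH})_{mn}|$ is super-polynomially small.

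Concretely, for $\lambda\in(0,a)$, introduce the multiplication operator $(M_\lambda u)_n=e^{\lambda n}u_n$ and set $H_\lambda := M_\lambda H M_\lambda^{-1}$. The matrix entries become $(H_\lambda)_{mn} = e^{\lambda(m-n)}H_{mn}$, so
\[
|(H_\lambda)_{mn}|\le A_1 e^{-(a-\lambda)|m-n|}+\|V\|_\infty\,\delta_{mn},
\]
and hence $\|H_\lambda\|\le C_\lambda<\infty$ by, e.g., Schur's test. Although $H_\lambda$ is not self-adjoint, the power series bound $\|e^{-itH_\lambda}\|\le e^{|t|\,\|H_\lambda\|}\le e^{C_\lambda|t|}$ is immediate. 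Using $e^{-itH_\lambda}=M_\lambda e^{-itH}M_\lambda^{-1}$ and applying the same argument with $-\lambda$ in place of $\lambda$, I get the key estimate
\[
|(e^{-itH})_{mn}|\le e^{-\lambda|m-n|+C_\lambda|t|}\qquad(m,n\in\Z,\ t\in\R).
\]

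Now fix $\phi$ with $\mathrm{supp}(\phi)\subset[-M,M]$. For $|n|\ge 2M$ and $m\in\mathrm{supp}(\phi)$, $|n-m|\ge |n|/2$, so
\[
|(e^{-itH}\phi,\delta_n)|\le \|\phi\|_1\, e^{-\lambda|n|/2+C_\lambda|t|}.
\]
Choose $\kappa:=4C_\lambda/\lambda$ and split the sum at $|n|=\kappa t+2M$. The ``near'' part is controlled by the unitarity bound $\sum_n|(e^{-itH}\phi,\delta_n)|^2=\|\phi\|^2$:
\[
\sum_{|n|\le\kappa t+2M}|n|^p|(e^{-itH}\phi,\delta_n)|^2\le(\kappa t+2M)^p\|\phi\|^2.
\]
The ``far'' part is dominated by $\|\phi\|_1^2\sum_{|n|>\kappa t+2M}|n|^p e^{-\lambda|n|+2C_\lambda t}$, which is $O(1)$ uniformly in $t$ by the choice of $\kappa$. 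Together, $\sum_n|n|^p|(e^{-itH}\phi,\delta_n)|^2\le C(1+t)^p$, and inserting this into \eqref{mom} yields
\[
\langle|X|_\phi^p\rangle(T)\le\frac{2C}{T}\int_0^\infty e^{-2t/T}(1+t)^p\,dt=O(T^p),
\]
which gives $\beta_\phi^+(q)\le 1$ by \eqref{dyn-exp}.

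There is no genuine obstacle; the only technical point to justify carefully is the exponentially weighted bound $\|H_\lambda\|\le C_\lambda$, which rests entirely on the hypothesis $|a_n|\le A_1 e^{-a|n|}$ and would fail (as would the conclusion) without some off-diagonal decay. The self-adjointness of $H$ is used only through unitarity of $e^{-itH}$ for the near-$n$ bound; the far-$n$ bound is purely a property of $A$ (and trivially of bounded $V$), so the argument adapts immediately to other bounded perturbations with the same hopping decay.
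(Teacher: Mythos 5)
Your proof is correct, and it takes a genuinely different route from the paper's. The paper follows the Radin--Simon invariant-domain strategy: it introduces weighted ``momentum'' operators $X_{2p}^\gamma$, computes the commutator $\hat X_{2p}^\gamma=-i[H_0,X_{2p}^\gamma]$ (which is again of the same form but of lower order because of the telescoping $n^p-(n-k)^p$), and then runs an induction over even moments $p=2N$, finally invoking H\"older monotonicity of $\beta_\phi^+(q)$ to get all $q$. You instead prove a Combes--Thomas-type ``light-cone'' bound directly on the unitary group, $|(e^{-itH})_{mn}|\le e^{-\lambda|m-n|+C_\lambda|t|}$, via exponential conjugation $H\mapsto M_\lambda HM_\lambda^{-1}$ and the elementary estimate $\|e^{-itH_\lambda}\|\le e^{|t|\|H_\lambda\|}$, then split the $n$-sum at $|n|\sim t$ and use unitarity for the bulk and the tail bound outside. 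The one point you should state explicitly rather than implicitly is that the identity $(e^{-itH_\lambda})_{mn}=e^{\lambda(m-n)}(e^{-itH})_{mn}$ is to be read entrywise (since $M_\lambda$ is unbounded on $\ell^2$); it follows from $(H_\lambda^k)_{mn}=e^{\lambda(m-n)}(H^k)_{mn}$, which telescopes term-by-term, together with absolute convergence of both exponential series in operator norm. Comparatively, your argument handles all moments $p>0$ in one stroke (no induction, no appeal to monotonicity), and it yields the stronger pointwise propagation estimate $|(e^{-itH})_{mn}|\le e^{-\lambda|m-n|+C_\lambda|t|}$ as a byproduct, which is useful elsewhere; the paper's commutator method is more algebraic and avoids any non-self-adjoint conjugation, but both rest in the same way on the exponential decay $|a_n|\le A_1e^{-a|n|}$ and the boundedness of $V$.
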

\begin{remark}  In fact, sufficiently fast decay
works equally well, but we restrict in all results, to the compactly supported $\phi$, for simplicity.
\end{remark}
Theorem \ref{mainthm1} is probably well known, but we didn't find the
proof in the literature. The proof,  following the ideas of
\cite{rs78,sim90}, is presented in the appendix.

Let  $R_{{\Lambda}}$  be the operator of  restriction to $\Lambda \subset \Z$. Define the Green's function by
\begin{equation}\label{g0}
G_{{\Lambda}}(z)=(R_{{\Lambda}}(H-zI)R_{{\Lambda}})^{-1}.
\end{equation}
Set $ G(z)=(H-zI)^{-1}$.  Clearly, both $  G_{{\Lambda}}(z)$ and $G(z)$ are always well defined for $z\in \C_+\equiv \{z\in \C: \Im z>0\}$. Sometimes, we drop the dependence on $z$  for simplicity.
Since the operator $H$ given by \eqref{g01} is bounded, there  exists
$K>0$  such that $\sigma({H})\subset [-K+1,K-1]$. Our main general
result is
\begin{theorem}\label{mainthm2}
		Let $H$ be given by \eqref{g01}.
	Suppose there exist    $\delta>0$ and $N_0>0$ such that the following is true.
	Let  $z=E+i\varepsilon$ with $|E|\leq K$ and $0<\varepsilon\leq \delta$.
	Suppose for    $N>N_0$, there exists an  interval $I\subset [-\frac{ N}{2}, -\frac{ N}{4}]$ or $I\subset [\frac{ N}{4}, \frac{ N}{2}]$ such that
	$|I|  \geq N^{\delta}  $
	and  for any $n,n^\prime\in {I}$ and $|n-n^\prime|\geq \frac{1}{20}|I|$,  we have
	\begin{equation*}
	| G_{{I}}(z)(n;n^\prime)|\leq e^{-|I|^{\delta}}.
	\end{equation*}
	Assume $\phi$ is compactly supported. 
	Then the upper transport exponent $\beta_{\phi}^+(p)=0 $ for any  $p>0$.
\end{theorem}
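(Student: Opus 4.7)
The plan is to convert the time-averaged dynamics into a Green function integral via Parseval, split the $n$-sum on a well-chosen scale, and use the good box as a spatial barrier to produce almost-exponential decay of $G(E+i/T)$ past the box. Taking the Fourier transform of $h_n(t):=e^{-t/T}(e^{-itH}\phi,\delta_n)\mathbf 1_{t\ge 0}$, which equals $-i(G(E+i/T)\phi,\delta_n)$, Parseval yields
\[
\frac{2}{T}\int_0^\infty e^{-2t/T}|(e^{-itH}\phi,\delta_n)|^2\,dt=\frac{1}{\pi T}\int_{-\infty}^\infty|(G(E+i/T)\phi,\delta_n)|^2\,dE.
\]
Multiplying by $|n|^p$ and summing over $n$ rewrites $\langle|X|_\phi^p\rangle(T)$ as an $E$-integral weighted by $\sum_n|n|^p$.

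I next split the $n$-sum at the scales $N:=(\log T)^{3/\delta^2}$ and $T^{2}$. Applying Theorem~\ref{mainthm1} to the $2p$-th moment together with Chebyshev's inequality discards $|n|>T^{2}$ at cost $T^{o(1)}$. The near-field $|n|\le N$ is bounded trivially by $N^{p}\|\phi\|^{2}=T^{o(1)}$. Everything reduces to the annulus $N<|n|\le T^{2}$, for which one needs a decay estimate on $G(E+i/T)(m;n)$ with $m\in\operatorname{supp}\phi$ and $n$ on the far side of the good box.

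The core claim is that, with $\varepsilon=1/T$ (which is $<\delta$ for $T$ large) and $|E|\le K$, the good box $I$ supplied by the hypothesis satisfies
\[
|G(E+i/T)(m;n)|\le T^{O(1)}\,e^{-c|I|^\delta}\le T^{O(1)}\,e^{-cN^{\delta^2}}
\]
for $m\in\operatorname{supp}\phi$ and $n$ on the far side of $I$; for $|E|>K$ the Combes--Thomas estimate gives a much stronger bound, uniformly in $T$, because $\operatorname{dist}(E,\sigma(H))\ge 1$. Inserting these two bounds, the annulus contribution is at most $T^{O(1)}e^{-c(\log T)^{3}}=O(T^{-A})$ for every $A>0$. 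Adding the three pieces yields $\langle|X|_\phi^p\rangle(T)\le T^{o(1)}$, which is precisely $\beta_\phi^+(p)=0$.

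The principal obstacle is the core claim: converting the \emph{interior} decay of $G_I$ into a \emph{full-space} decay of $G(E+i/T)$ across $I$. The natural route is the resolvent identity associated to the decoupling $H=H_I\oplus H_{I^c}+W$, with $W$ the long-range coupling. In the nearest-neighbor case this is the clean two-site boundary formula; here exponential decay $|a_n|\le A_1 e^{-a|n|}$ limits the effective boundary of $I$ to a strip of width $O(\log N)$ (contributions from larger range give harmless power-loss $N^{-a}$). Each application of the identity factors through $G_I(k;k')$ between two boundary sites: when these lie on opposite edges of $I$ the separation is $\ge|I|/20$ and the hypothesis supplies $e^{-|I|^\delta}$, while ``same-edge'' residual terms must be absorbed by a bounded number of further iterations, using the a priori bounds $\|G_I\|,\|G(E+i/T)\|\le T$ together with the geometric decay of $W$ to control the combinatorics. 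A secondary subtlety is that the hypothesis supplies a barrier on only one side of the origin at each scale; since the hypothesis holds at \emph{every} $N>N_0$, both directional contributions to the transport moment are recovered by invoking the argument at two nearby scales with the barrier on the appropriate side.
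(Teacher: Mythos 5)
Your framework (Parseval, Combes--Thomas outside the spectrum, one good box as a spatial barrier, and the reduction to a decay estimate on $G(E+i/T)(j;n)$) matches the paper's outline, and the scale-splitting bookkeeping would work. But the entire content of the theorem is in what you call the ``core claim,'' and your proposed route for it does not work. You suggest decoupling $H=H_I\oplus H_{I^c}+W$ and iterating the resolvent identity. This is exactly the route the paper opens Section~2 by \emph{rejecting}: for long-range $A$, such an expansion produces terms with all indices grouped near a single edge of $I$, picking up neither the barrier decay of $G_I$ nor the decay of $|a_n|$.

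Concretely: after one step you reach $G(k';n)$ with $k'$ near the left edge of $I$ and $n$ still far to the right. Expanding $G(k';n)$ again through $I$ gives a branch $G_I(k';k'')a_{k''-k'''}G(k''';n)$ with $k',k''$ both near the left edge (the hypothesis provides \emph{no} decay for such same-edge pairs, since $|k'-k''|<|I|/20$) and $k'''<b-\ell$, so $G(k''';n)$ is again an across-the-barrier Green function of the very type you started from. The per-iteration weight of this loop is $O(T)$, coming from $\|G_I\|,\|G\|\le T$; the ``geometric decay of $W$'' only localizes the indices to within $O(\log N)$ of the edge, it does not make the loop contract. A bounded number of further iterations therefore cannot absorb the same-edge terms. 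An additional unaddressed wrinkle: $I^c$ is disconnected, so $H_{I^c}$ still couples the two components and $G_{I^c}(j;n)$ is \emph{not} zero across the barrier, so even the leading term is not as clean as in the nearest-neighbor case.

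The paper's resolution (Lemmas~\ref{Keyle1}--\ref{Keyle3}) is a genuinely different decomposition scheme: three single applications of the resolvent identity, each with a connected, nested choice of blocks. Lemma~\ref{Keyle1} takes $\Lambda=(-\infty,b+\ell]$ split as $\Lambda_1=I$, $\Lambda_2=(-\infty,b-\ell-1]$ and bounds $G_\Lambda(m;n)$ for $m\in I$ away from the \emph{left} edge and $n\in I$; the crucial point is that the first Green factor is then $G_I(m;\cdot)$, so when the intermediate index $n_1$ sits near the left edge (the would-be same-edge branch) you already have $|m-n_1|\gtrsim\ell$ and the hypothesis supplies $e^{-c\ell^c}$, killing the loop at the very first step. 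Lemma~\ref{Keyle2} (same $\Lambda$, blocks swapped) then moves the left endpoint to $j$ near the origin, and Lemma~\ref{Keyle3} glues on $[b+\ell+1,\infty)$. Starting the expansion from a point \emph{inside} $I$ rather than from the origin is the key idea missing from your sketch.

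Two minor remarks. The scale split at $N=(\log T)^{3/\delta^2}$ and $T^2$ and the Chebyshev step via Theorem~\ref{mainthm1} are unnecessary: once one has $a(j,n,T)\le CT^{O(1)}e^{-c|n|^c}$ for large $|n|$ (with $c$ depending only on $\delta$), the full sum $\sum_n|n|^p a(j,n,T)$ is already $O(T^{O(1)})$, and monotonicity of $\beta_\phi^+(p)$ in $p$ gives $\beta_\phi^+(p)=0$. And your ``secondary subtlety'' fix of using two nearby scales does not actually follow from the stated hypothesis, which permits the good box to lie on the same side of the origin for all $N$; as written it simply assumes what it needs.
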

\begin{remark} For the Schr\"odinger case, the existence of such interval $I$ (in fact,
a stronger statement, but this is not important) can be deduced from
the positive Lyapunov exponents and Cramer's rule  by the method going
back to \cite{jlast3}.\end{remark}

We say   $\alpha\in\R$ 
 is Diophantine if there exist $\kappa$ and $ \tau>0$
such that for any  $k\in \Z\backslash\{0\}$,
\begin{equation*}
||k\alpha||_{\R/\Z} \geq \frac{\tau}{|k|^{\kappa}},
\end{equation*}
where $||x||_{\R/\Z}=\text{dist}(x,\Z)$.


Let $ H_{\alpha,\theta,\epsilon} $ be as in (\ref{bop}). Fixing
$\alpha$ and $\epsilon$,  we denote the
$\beta_{\phi}^+(p)$ for operator $ H_{\alpha,\theta,\epsilon} $ by $\beta_{\phi,\theta}^+(p).$ Our main application is
\begin{corollary}\label{thmap1}
There exists an $\epsilon_0=\epsilon_0(v,A_1,a)>0$ such that for any
compactly supported $\phi$ and Diophantine $\alpha,$  
$\beta_{\phi,\theta}^+(p)=0 $
for any $|\epsilon|\leq \epsilon_0,$ any $\theta \in \R$ and $p>0$.
\end{corollary}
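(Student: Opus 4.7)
The plan is to verify the hypothesis of Theorem~\ref{mainthm2} for $H=H_{\theta,\alpha,\epsilon}$ and then invoke it. Fix a compactly supported $\phi$, a Diophantine $\alpha$, a small $|\epsilon|\le\epsilon_0$, and an arbitrary phase $\theta\in\R$. I will exhibit a $\delta>0$ depending on $v,A_1,a,\alpha$ and an $N_0$ such that for every $z=E+i\varepsilon$ with $|E|\le K$ and $0<\varepsilon\le\delta$, and every $N>N_0$, there is an interval $I\subset[N/4,N/2]$ (or its reflection) of length $|I|\ge N^\delta$ on which $|G_I(z)(n,n')|\le e^{-|I|^\delta}$ whenever $n,n'\in I$ and $|n-n'|\ge|I|/20$.

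The key ingredient is a ``one good box'' statement that I would extract from Bourgain's proof of Theorem~\ref{btheorem}. In \cite{bbook}, Chapter~11, the multi-scale scheme at scale $M=\lfloor N^{c_0}\rfloor$, for some small $c_0\in(0,1)$, produces the following: for sufficiently small $|\epsilon|$, for every $E\in[-K,K]$, and every $\theta\in\R$, the number of $m\in[-N,N]$ for which $I_m:=[m,m+M]$ is \emph{bad}---meaning $G_{I_m}(E)$ fails to have exponential off-diagonal decay at a fixed rate $c>0$---is at most $N^{1-\sigma}$ for some $\sigma>0$ independent of $N$. The mechanism is the Large Deviation Theorem for the subharmonic function $M^{-1}\log|\det(H_{I_m}-E)|$ viewed as a function of $\theta+m\alpha$, combined with a semi-algebraic analysis of the resulting exceptional set and a $k\alpha$-counting argument based on the Diophantine hypothesis.

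Since $[N/4,N/2]$ contains of the order of $N/M=N^{1-c_0}$ disjoint intervals of length $M$, and $N^{1-c_0}\gg N^{1-\sigma}$ once $c_0<\sigma$, at least one such interval $I$ is good. Taking $\delta<c_0$ yields $|I|=M\ge N^\delta$ and $|G_I(E)(n,n')|\le e^{-c|n-n'|}\le e^{-|I|^\delta}$ in the required range of $|n-n'|$. The extension to $z=E+i\varepsilon$ for small $\varepsilon$ is routine: $G_I(z)$ is meromorphic in $z$ with poles at the (real) eigenvalues of $H_I$, and Cramer's rule propagates the exponential Green's function decay from real $E$ to a complex neighborhood, in particular covering $0<\varepsilon\le\delta$ after possibly shrinking $\delta$. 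Theorem~\ref{mainthm2} then delivers $\beta_{\phi,\theta}^+(p)=0$ for all $p>0$.

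The main obstacle is uniformity in $(\theta,E)$. Bourgain's sublinear bound is naturally stated after excising a small exceptional set of $(\theta,E)$, whereas Theorem~\ref{mainthm2} demands a good box for \emph{every} $\theta$ and every $E\in[-K,K]$. The authors' key observation is that only a single clean subinterval is needed, not the near-totality required to glue scales together for Anderson localization; so even on the exceptional locus, the sublinear cardinality of bad boxes leaves ample good boxes inside $[N/4,N/2]$. Certifying this ``one good box'' conclusion uniformly from the multi-scale architecture of \cite{bbook} is the only genuinely technical step.
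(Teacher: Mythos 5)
Your proposal follows the paper's own route exactly: cite Bourgain's sublinear bound on the number of ``bad'' boxes (the paper's \eqref{sublinear}, established in Ch.~11 of \cite{bbook}, with the complex-energy version supplied by \cite{liu}), observe that since $[N/4,N/2]$ contains far more than $N^{1-\delta_0}$ disjoint subintervals of length $N^{c_0}$ (for $c_0$ small), at least one of them must be good, and then invoke Theorem~\ref{mainthm2}. One clarification of your framing: the sublinear bound in \cite{bbook} is \emph{already} uniform in $\theta$ and in $E$ once $\alpha$ is Diophantine — this is exactly what the semi-algebraic/$k\alpha$-counting machinery produces — so the ``uniformity in $(\theta,E)$'' you flag as the main obstacle is in fact supplied directly by the cited input. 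The exceptional sets of $(\theta,\alpha)$ in Bourgain's localization theorem arise at the \emph{next} stage of his scheme (elimination of double resonances across widely separated scales), which is precisely the step the present argument gets to skip because it needs only one good box rather than a near-total covering.
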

It immediately implies also
\begin{corollary}\label{cor2}

There exists an $\epsilon_0=\epsilon_0(v,A_1,a)>0$ such that  for any $\phi\in \ell^2(\Z)$
the spectral measure $\mu_{\phi}$ of operator
$H_{\theta,\alpha,\epsilon}$ is zero dimensional for any $\theta \in \R,$  Diophantine $\alpha,$
and any $|\epsilon|\leq \epsilon_0$.
\end{corollary}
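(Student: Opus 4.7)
The plan is to deduce Corollary~\ref{cor2} from Corollary~\ref{thmap1} by combining the Guarneri--Combes--Last inequality, which bounds the dimension of a spectral measure in terms of transport exponents, with an elementary Cauchy--Schwarz bound that upgrades the conclusion from compactly supported vectors to all of $\ell^2(\Z)$.

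First, I apply Corollary~\ref{thmap1} to $\phi=\delta_n$, which is compactly supported, obtaining $\beta^+_{\delta_n,\theta}(p)=0$ for every $n\in\Z$, every $p>0$, every $\theta\in\R$, every Diophantine $\alpha$, and every $|\epsilon|\leq\epsilon_0$. The Guarneri--Combes--Last lower bound $\beta^+_\phi(p)\geq \dim_H^+(\mu_\phi)$ (in its one-dimensional form, see e.g.\ Last, J.~Funct.~Anal.~142 (1996)) then forces each $\mu_{\delta_n}$ to be concentrated on a Borel set $S_n\subset\R$ of Hausdorff dimension zero.

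To pass to a general $\phi=\sum_{n} c_n\delta_n\in \ell^2(\Z)$, let $E(\cdot)$ denote the spectral resolution of $H_{\theta,\alpha,\epsilon}$. For every Borel $A\subset\R$, the triangle and Cauchy--Schwarz inequalities yield
\begin{equation*}
\mu_\phi(A)=\|E(A)\phi\|^{2}\leq \Bigl(\sum_{n} |c_n|\,\|E(A)\delta_n\|\Bigr)^{2}\leq \|\phi\|^{2}\sum_{n}\mu_{\delta_n}(A),
\end{equation*}
so $\mu_\phi$ is concentrated on $\bigcup_{n} S_n$, a countable union of Hausdorff-dimension-zero sets, which itself has Hausdorff dimension zero. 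Hence $\mu_\phi$ is zero dimensional.

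There is really no serious obstacle here: both ingredients are standard, and the content of the corollary is entirely in Corollary~\ref{thmap1}. The only subtlety worth flagging is purely notational---the choice between the upper Hausdorff dimension of a measure and the Hausdorff dimension of its essential support---but both notions collapse to zero under $\beta^+_\phi(p)=0$ for all $p>0$, which is precisely what is meant by ``zero dimensional'' in the statement.
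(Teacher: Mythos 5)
Your proof is correct and follows the route the paper has in mind: the paper dismisses Corollary~\ref{cor2} with ``It immediately implies also,'' and the two standard ingredients you supply---the Guarneri--Combes--Last inequality $\beta^+_\phi(p)\geq \dim_H^+(\mu_\phi)$ applied to $\phi=\delta_n$, together with the Cauchy--Schwarz reduction showing $\mu_\phi$ is concentrated on $\bigcup_n S_n$ for any $\phi\in\ell^2(\Z)$---are exactly what makes the deduction immediate. No genuine difference in approach, and no gap.
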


 \section{Proof of Theorem \ref{mainthm2}}

For the Schr\"odinger case, the proof would be just a double application of
the resolvent identity:
\begin{align*}
 G=&G_I +G_{I^c}-(G_I +G_{I^c}) (H-H_I-H_{I^c})(G_I +G_{I^c})\\
 &+(G_I +G_{I^c}) (H-H_I-H_{I^c}) (G_I +G_{I^c}) (H-H_I-H_{I^c})G
\end{align*}
ensuring the decay of $|G(0,n)|$ based on the ``barrier'' box $I.$
The problem with the long-range case is that such expansion for
$G(0,N)$ will contain terms all grouped nearby, thus neither incorporating
the decay coming from the barrier box nor from $|a_n|.$ In order to
tackle this difficulty we introduce several extra steps, all involving
applications of the
resolvent identity, but with different boxes.\\

Since $\phi$  has a compact support, there exists $K_1$ such that 
$\phi(n)=0$ for $|n|\geq K_1$.

   Assume $T>\frac{1}{\delta}$.
   Fix $z=E+i\frac{1}{T}$ with $|E|\leq K$.  Below, $C$ ($c$) is  a large (small) constant that may depend  on $\delta$,  $K$,  $A_1$, $a$, $\phi$ and $V=\{V_n\}$.
   Let $I=[b-\ell,b+\ell]$ with $\ell> 0 $ and
   $b$ such that $|b\pm\ell|$ is large. 
   Suppose
   \begin{equation}\label{smallgreen}
   |G_{I}(m,n)|\leq C e^{-c\ell^c}
   \end{equation}
   for any $m\in I,n\in {I}$ and $|m-n|\geq \frac{1}{20}\ell$.
   
   Recall that if
   \begin{equation*}
   \Lambda=\Lambda_1\cup\Lambda_2, \Lambda_1\cap\Lambda_2=\emptyset,
   \end{equation*}
   then
   \begin{equation*}
   G_{{\Lambda}}=G_{{\Lambda}_1}+G_{{\Lambda}_2} -(G_{{\Lambda}_1}+G_{{\Lambda}_2})(H_{{\Lambda}} -H_{{\Lambda}_1}-H_{{\Lambda}_2}) G_{{\Lambda}},
   \end{equation*}
   (provided the relevant matrices $R_{{\Lambda}}(H-zI)R_{{\Lambda}}$ and $R_{{\Lambda}_i}(H-zI)R_{{\Lambda}_i}$ are invertible)
   where $H_{{\Lambda}}=R_{{\Lambda}}HR_{{\Lambda}}$.
   
   If $m\in {\Lambda}_1$ and $n\in {\Lambda}$, we have
  \begin{equation}\label{Bolckgreen}
    G_{\Lambda}(m,n)=G_{\Lambda_1}(m,n)\chi_{\Lambda_1}(n)-\sum_{n_1\in \Lambda_1,n_2\in\Lambda_2} G_{\Lambda_1}(m,n_1) a_{n_1-n_2} G_{\Lambda}(n_2,n).
  \end{equation}
  Therefore,
   \begin{equation}\label{g14}
 | G_{\Lambda}(m,n)|\leq |G_{\Lambda_1}(m,n)\chi_{\Lambda_1}(n)|+C\sum_{n_1\in \Lambda_1,n_2\in\Lambda_2} |G_{\Lambda_1}(m,n_1)|e^{-c|n_1-n_2|} | G_{\Lambda}(n_2,n)|.
  \end{equation}
  \begin{lemma}\label{Keyle1}
  Assume that  for some interval $I=[b-\ell,b+\ell]$ and $z=E+\frac{i}{T}$, (\ref{smallgreen}) holds.
  Then
  \begin{equation}\label{smallgreen1}
    |G_{\Lambda}(m,n)|\leq CT^2e^{-c\ell^c}
 \end{equation}
   for any $n\in I $, $m\in[b-\ell+\frac{\ell}{10}, b+\ell]$ and $|m-n|\geq \frac{1}{10}\ell$, where $\Lambda=(-\infty,b+\ell]$.
  \end{lemma}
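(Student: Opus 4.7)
The plan is to decompose $\Lambda=(-\infty,b+\ell]$ as $\Lambda_1 = I = [b-\ell,b+\ell]$ and $\Lambda_2 = (-\infty,b-\ell-1]$, and apply the resolvent formula (\ref{Bolckgreen}). Since $m\in I$, this gives
\begin{equation*}
G_\Lambda(m,n) = G_I(m,n)\chi_I(n) - \sum_{n_1\in I,\,n_2\in\Lambda_2} G_I(m,n_1)\, a_{n_1-n_2}\, G_\Lambda(n_2,n).
\end{equation*}
The first term is immediate from the hypothesis (\ref{smallgreen}): since $n\in I$ and $|m-n|\geq \ell/10 > \ell/20$, it is bounded by $Ce^{-c\ell^c}$.

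For the sum, I would use the uniform operator-norm bound $\|G_\Lambda\|\leq 1/\operatorname{Im} z = T$ to control the factor $|G_\Lambda(n_2,n)|$, and then split the $n_1$-sum based on the distance from $m$. If $|m-n_1|\geq \ell/20$, then the hypothesis (\ref{smallgreen}) gives $|G_I(m,n_1)|\leq Ce^{-c\ell^c}$, and the remaining double sum $\sum_{n_1\in I,\,n_2\in\Lambda_2} e^{-a|n_1-n_2|}$ converges geometrically and is at most $C\ell$; this contributes at most $CT\ell\,e^{-c\ell^c}$. If $|m-n_1|<\ell/20$, the pointwise hypothesis is no longer available, so I would instead estimate $|G_I(m,n_1)|\leq T$; the crucial geometric observation is that, because $m \geq b-\ell+\ell/10$ and $n_2\leq b-\ell-1$, one has $|n_1-n_2|\geq \ell/20$, so the kernel bound $|a_{n_1-n_2}|\leq A_1 e^{-a|n_1-n_2|}$ produces an honest exponential-in-$\ell$ decay, and this piece contributes at most $CT^2\ell\,e^{-a\ell/20}$.

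Combining the three contributions and absorbing the polynomial prefactors $\ell$ into the stretched-exponential decay (since $e^{-c\ell^c}$ dominates $\ell^{-k}$ for any $k$), one obtains $|G_\Lambda(m,n)|\leq CT^2 e^{-c\ell^c}$ after possibly shrinking $c$.

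I do not expect any real obstacle here: the only subtle point is the case $|m-n_1|<\ell/20$, where the pointwise Green's function bound on $I$ fails, but the hypothesis $m\geq b-\ell+\ell/10$ is precisely what keeps $n_1$ far from $\Lambda_2$ so that the decay of the hopping kernel $a_{n_1-n_2}$ takes over. This is the reason the statement restricts $m$ to the right portion of $I$ (the side opposite to $\Lambda_2$) rather than to all of $I$.
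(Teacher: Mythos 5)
Your proposal is correct and follows essentially the same route as the paper: the same decomposition $\Lambda_1 = I$, $\Lambda_2 = (-\infty,b-\ell-1]$, the same first-term bound, and the same dichotomy in the $n_1$-sum (the paper splits on $n_1\in[b-\ell,b-\ell+\frac{\ell}{20}]$ versus $n_1\in[b-\ell+\frac{\ell}{20},b+\ell]$, while you split on $|m-n_1|\gtrless\frac{\ell}{20}$; given $m\geq b-\ell+\frac{\ell}{10}$ these are the same trade-off between the Green's function decay on $I$ and the decay of the hopping kernel $a_{n_1-n_2}$). The only cosmetic difference is the extra polynomial factor of $\ell$ that you track explicitly before absorbing it into the stretched exponential, which the paper does implicitly.
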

   \begin{proof}
  	Let $\Lambda_1=I=[b-\ell,b+\ell]$ and $\Lambda_2=(-\infty,b-\ell-1]$. Clearly, $\Lambda=\Lambda_1\cup \Lambda_2$.
  	By \eqref{smallgreen} and  (\ref{g14}), one has that
  	\begin{equation}\label{Bolckgreen1}
    | G_{\Lambda}(m,n)|\leq  Ce^{-c \ell ^c}+C\sum_{n_1\in \Lambda_1,n_2\in\Lambda_2} |G_{\Lambda_1}(m,n_1)|e^{-c|n_1-n_2|} | G_{\Lambda}(n_2,n)|.
  	\end{equation}
  	It suffices to bound the second  term on the right of  (\ref{Bolckgreen1}).
  	
  	For any $n_1\in\Lambda_1$,
  	\begin{equation}\label{g15}
  	\sum_{n_2\in\Lambda_2} e^{-c|n_1-n_2|}\leq C.
  	\end{equation}
    If $n_1\in[b-\ell, b-\ell+\frac{\ell}{20}]$, by the fact that $m\in[b-\ell+\frac{\ell}{10}, b+\ell]$ and (\ref{smallgreen}),
  one has
  \begin{equation}\label{equ1}
  | G_{\Lambda_1}(m,n_1)|\leq Ce^{-c\ell^c}.
  \end{equation}
  
  If $n_1\in[ b-\ell+\frac{\ell}{20},b+\ell]$, 
  one has
  \begin{equation}\label{equ2}
 \sum_{n_2\in \Lambda_2} e^{-c|n_1-n_2|}\leq Ce^{-c\ell}.
  \end{equation}

 Since $\Im z=\frac{1}{T}$, one has  that 
  \begin{equation}\label{equ3}
  | G_{\Lambda_1}(m,n_1)|\leq T,|G_{\Lambda}(n_2,n)|\leq     T.
  \end{equation}
  
 By \eqref{g15},  (\ref{equ1}),(\ref{equ2}) and (\ref{equ3}),
  we have
  \begin{equation*}
\sum_{n_1\in \Lambda_1,n_2\in\Lambda_2} |G_{\Lambda_1}(m,n_1)|e^{-c|n_1-n_2|} | G_{\Lambda}(n_2,n)|\leq C T^2e^{-c\ell^c}
  \end{equation*}
  \end{proof}
  \begin{lemma}\label{Keyle2}
  Assume $b-\ell$ is large. Under the conditions of Lemma \ref{Keyle1}, we have  that for any $j$ with $|j|\leq K_1$ and  $n\in[b+\ell-\frac{\ell}{10}, b+\ell]$, 
  \begin{equation}\label{smallgreen2}
    |G_{\Lambda}(j,n)|\leq CT^4e^{-c\ell^c},
 \end{equation}
   where $\Lambda=(-\infty,b+\ell]$.
  \end{lemma}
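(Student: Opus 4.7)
The plan is to deduce the bound on $G_\Lambda(j,n)$ from Lemma~\ref{Keyle1} by a single further application of the block resolvent identity, with a partition of $\Lambda=(-\infty,b+\ell]$ chosen to fit the hypothesis of Lemma~\ref{Keyle1} exactly. Since $b-\ell$ is assumed large and $|j|\le K_1$, the point $j$ lies strictly to the left of $I$. I would split $\Lambda=\Lambda_1'\cup\Lambda_2'$ with
\begin{equation*}
    \Lambda_1'=[b-\ell+\ell/10,\, b+\ell], \qquad \Lambda_2'=(-\infty,\, b-\ell+\ell/10-1],
\end{equation*}
so that $j\in\Lambda_2'$ and $n\in\Lambda_1'$. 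The crucial trick is pushing the split point $\ell/10$ to the right of the natural left boundary of $I$, rather than using $\Lambda_1'=I$.

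Applying \eqref{Bolckgreen} with the roles of the two blocks swapped, and using $\chi_{\Lambda_2'}(n)=0$, gives
\begin{equation*}
    G_\Lambda(j,n)=-\sum_{n_1\in\Lambda_2',\, n_2\in\Lambda_1'} G_{\Lambda_2'}(j,n_1)\, a_{n_1-n_2}\, G_\Lambda(n_2,n).
\end{equation*}
I would use the trivial bound $|G_{\Lambda_2'}(j,n_1)|\le T$ throughout, and split the $n_2$-sum into two ranges. For $n_2\in[b-\ell+\ell/10,\, n-\ell/10]$ (the ``bulk''), all hypotheses of Lemma~\ref{Keyle1} are satisfied, giving $|G_\Lambda(n_2,n)|\le CT^2 e^{-c\ell^c}$; together with $\sum_{n_1\in\Lambda_2'}e^{-a|n_1-n_2|}\le C$ and $O(\ell)$ bulk values of $n_2$, this contributes at most $CT^3 e^{-c\ell^c/2}$. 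For $n_2\in(n-\ell/10,\, b+\ell]$ the lemma no longer applies, but the construction of $\Lambda_2'$ forces $|n_1-n_2|\ge 2\ell-3\ell/10$ for every $n_1\in\Lambda_2'$, so $\sum_{n_1}e^{-a|n_1-n_2|}\le Ce^{-c\ell}$; combined with the trivial $|G_\Lambda(n_2,n)|\le T$ and $O(\ell)$ such $n_2$, the contribution is $CT^2 e^{-c\ell/2}$. Summing the two pieces yields the claimed $|G_\Lambda(j,n)|\le CT^4 e^{-c\ell^c}$, with room to spare.

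The main technical hurdle is the choice of partition. The naive choice $\Lambda_1'=I$ would leave a problematic range $n_2\in[b-\ell,\, b-\ell+\ell/10)$ in which Lemma~\ref{Keyle1} fails (its hypothesis $n_2\in[b-\ell+\ell/10,b+\ell]$ is violated) while $|n_1-n_2|$ need not be large enough for kernel decay to kick in. Shifting the split point to $b-\ell+\ell/10$ removes exactly this bad set; since the only property of $\Lambda_2'$ that is used is the operator-norm bound $\|G_{\Lambda_2'}\|\le T$, enlarging $\Lambda_2'$ slightly into the barrier region costs nothing, and the resulting bulk sum falls squarely within Lemma~\ref{Keyle1}'s scope.
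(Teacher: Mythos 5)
Your proof is correct, and the core mechanism is the same as the paper's: split $\Lambda=(-\infty,b+\ell]$ into two blocks, apply the block resolvent identity once, bound the sum over $n_2$ by using Lemma~\ref{Keyle1} in a bulk range and the exponential kernel decay together with trivial resolvent bounds $\|G_{\cdot}\|\le T$ near the right edge. The only difference is cosmetic: the paper partitions at the natural left edge of $I$ (so $\Lambda_1=(-\infty,b-\ell-1]$, $\Lambda_2=I$), while you shift the cut to $b-\ell+\ell/10$. Your stated motivation for the shift --- that the paper's partition would leave a ``problematic range'' $n_2\in[b-\ell,b-\ell+\ell/10)$ where Lemma~\ref{Keyle1} ``fails'' --- is not accurate. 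In that range the paper simply invokes $|G_\Lambda(n_2,n)|=|G_\Lambda(n,n_2)|$ and applies Lemma~\ref{Keyle1} with $m=n$ (which lies in $[b-\ell+\ell/10,b+\ell]$ since $n\ge b+\ell-\ell/10$) and with $n_2\in I$ in the unrestricted slot, together with $|n-n_2|\ge\ell/10$ because $n_2\le b+\ell-\ell/5$. So the ``naive'' partition works perfectly well; what your choice buys is avoiding the appeal to the symmetry $|G_\Lambda(n_2,n)|=|G_\Lambda(n,n_2)|$, and it even yields a slightly sharper power of $T$ ($T^3$ instead of the stated $T^4$), but neither of these is essential.
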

  \begin{proof}
   Let $\Lambda_2=[b-\ell,b+\ell]$, $\Lambda_1=(-\infty,b-\ell-1]$ and $\Lambda=(-\infty,b+\ell]$.
  By (\ref{g14}), one has that for any  $j$ with $|j|\leq K_1$, 
  \begin{equation}\label{Bolckgreen2}
   | G_{\Lambda}(j,n)|\leq C\sum_{n_1\in \Lambda_1,n_2\in\Lambda_2} |G_{\Lambda_1}(j,n_1) |e^{-c|n_1-n_2|} |G_{\Lambda}(n_2,n)|.
  \end{equation}

	For any $n_2\in\Lambda_2$,
\begin{equation}\label{g16}
\sum_{n_1\in\Lambda_1} e^{-c|n_1-n_2|}\leq C.
\end{equation}

  If $n_2\in[b-\ell, b+\ell-\frac{\ell}{5}]$, by the fact that $n\in[b+\ell-\frac{\ell}{10}, b+\ell]$ and (\ref{smallgreen1}),
  one has
  \begin{equation}\label{equ5}
   |G_{\Lambda}(n_2,n)|= |G_{\Lambda}(n,n_2)|\leq CT^2e^{-c\ell^c}.
  \end{equation}

  If $n_2\in[ b+\ell-\frac{\ell}{5},b+\ell]$, by the fact that  $n_1\leq b-\ell$,
  one has
  \begin{equation}\label{equ6}
 \sum_{n_1\in\Lambda_1} e^{-c|n_1-n_2|}\leq Ce^{-c\ell}.
  \end{equation}

 By \eqref{g16},  (\ref{equ5}),(\ref{equ6}) and (\ref{equ3}),
  we have
  \begin{equation*}
  \sum_{n_1\in \Lambda_1,n_2\in\Lambda_2} |G_{\Lambda_1}(j,n_1)|e^{-c|n_1-n_2|} |G_{\Lambda}(n_2,n)|\leq CT^4e^{-c\ell^c}.
  \end{equation*}
 This implies  (\ref{smallgreen2}).
  \end{proof}
  \begin{lemma}\label{Keyle3}
  	Let $z=E+\frac{i}{T}$. 
 Assume that $\ell\geq |\tilde{N}|^{\delta}$ and for some  interval $I=[b-\ell,b+\ell]$ with $I\subset \left[\frac{|\tilde{N}|}{4}, \frac{|\tilde{N}|}{2}\right]$ or  $I\subset \left[-\frac{|\tilde{N}|}{2}, -\frac{|\tilde{N}|}{4}\right]$,  (\ref{smallgreen}) holds.
  Then for any $j$ with $|j|\leq K_1$, 
  \begin{equation}\label{smallgreen3}
    |G_{\Lambda}(j,\tilde{N})|\leq CT^6e^{-c{|\tilde{N}}|^c},
 \end{equation}
  where $\Lambda=(-\infty,\infty)$.
  \end{lemma}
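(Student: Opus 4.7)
The plan is to apply one further block-resolvent step in order to reduce the full-line Green's function $G_\Z(j,\tilde N)$ to the half-line Green's function already controlled by Lemma \ref{Keyle2}. Without loss of generality assume $\tilde N > 0$ and $I \subset [\tilde N/4, \tilde N/2]$; the other case is symmetric. Set $\Lambda_1 = (-\infty, b+\ell]$ and $\Lambda_2 = [b+\ell+1, \infty)$, so that $j \in \Lambda_1$ (since $|j|\leq K_1 \ll b+\ell$ once $\tilde N$ is large) and $\tilde N \in \Lambda_2$. Applying \eqref{Bolckgreen} with this splitting, the term $G_{\Lambda_1}(j,\tilde N)\chi_{\Lambda_1}(\tilde N)$ vanishes and one is reduced to bounding
\begin{equation*}
|G_\Lambda(j,\tilde N)| \leq C \sum_{n_1 \in \Lambda_1,\, n_2 \in \Lambda_2} |G_{\Lambda_1}(j,n_1)|\, e^{-c|n_1-n_2|}\, |G_\Lambda(n_2,\tilde N)|.
\end{equation*}

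I would then split the $n_1$-sum at the threshold $b+\ell-\ell/10$. In the ``near'' range $n_1 \in [b+\ell - \ell/10, b+\ell]$, Lemma \ref{Keyle2} applies to $G_{\Lambda_1}(j,n_1)$ and gives the bound $CT^4 e^{-c\ell^c}$; combining this with the trivial estimate $|G_\Lambda(n_2,\tilde N)| \leq T$, the geometric bound $\sum_{n_2 \in \Lambda_2} e^{-c|n_1-n_2|} \leq C$, and the fact that there are only $O(\ell)$ choices of $n_1$ in this range (with the $\ell$ factor absorbed into the stretched-exponential), this contributes at most $CT^5 e^{-c\ell^c}$.

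In the complementary ``far'' range $n_1 \leq b+\ell-\ell/10$, neither Green's function comes with an improved decay estimate, so I would bound both by $T$. The saving is now purely geometric: since $n_2 - n_1 \geq \ell/10 + 1$ throughout, a two-variable geometric-series estimate yields $\sum_{n_1,\, n_2} e^{-c|n_1-n_2|} \leq C e^{-c\ell/10}$, so this range contributes at most $CT^2 e^{-c\ell}$. Adding the two contributions and using $\ell \geq |\tilde N|^\delta$ to replace $e^{-c\ell^c}$ by $e^{-c|\tilde N|^c}$ (with a new small constant $c$) delivers the claimed bound $CT^6 e^{-c|\tilde N|^c}$.

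The only point that requires genuine care is the convergence of the infinite $n_1$-sum in the far range, since $G_{\Lambda_1}(j,\cdot)$ carries no a priori decay in its second argument. This is not a real obstacle: the exponential decay of $a_{n_1-n_2}$ absorbs both $n_1$ and $n_2$ through a convergent geometric tail, and this is precisely the step at which the argument uses the exponential (rather than just compactly supported) decay of the long-range hopping kernel.
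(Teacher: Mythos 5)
Your proof is correct and follows essentially the same route as the paper: the same decomposition $\Lambda_1=(-\infty,b+\ell]$, $\Lambda_2=[b+\ell+1,\infty)$, the same split of the $n_1$-sum at $b+\ell-\ell/10$, the same invocation of Lemma \ref{Keyle2} in the near range, and the same geometric saving $e^{-c\ell}$ in the far range. The only cosmetic difference is that you invoke an $O(\ell)$ counting factor for $n_1$ in the near range, whereas the paper bounds the full double sum $\sum_{n_1\in\Lambda_1,n_2\in\Lambda_2}e^{-c|n_1-n_2|}\leq C$ directly (its equation \eqref{g17}), which avoids any $\ell$ factor from the outset; both give the stated bound.
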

  \begin{proof}
  	Without loss of generality, assume $\tilde{N}>0$.
   Let $\Lambda_1=(-\infty,b+\ell]$, $\Lambda_2=[b+\ell+1,\infty)$ and $\Lambda=(-\infty,\infty)$.
  By (\ref{g14}), one has
  \begin{equation}\label{Bolckgreen3}
   | G_{\Lambda}(j,\tilde{N})|\leq C\sum_{n_1\in \Lambda_1,n_2\in\Lambda_2} |G_{\Lambda_1}(j,n_1)| e^{-c|n_1-n_2|} |G_{\Lambda}(n_2,N)|.
  \end{equation}

First, one has
\begin{equation}\label{g17}
\sum_{n_1\in \Lambda_1,n_2\in\Lambda_2} e^{-c|n_1-n_2|}\leq C.
\end{equation}
  If $n_1\in[b+\ell-\frac{\ell}{10}, b+\ell]$,  By  (\ref{smallgreen2}),
  one has
  \begin{equation}\label{equ7}
   |G_{\Lambda_1}(j,n_1)|\leq CT^4e^{-c\ell^c}.
  \end{equation}

  If $n_1\in( -\infty,b+\ell-\frac{\ell}{10}]$ and  $n_2\in \Lambda_2$,
  one has
  \begin{equation}\label{equ8}
\  e^{-c|n_1-n_2|}\leq Ce^{-c\ell}.
  \end{equation}

 By \eqref{g17},  (\ref{equ7}), (\ref{equ8}) and  (\ref{equ3}),
  we have
  \begin{equation*}
 \sum_{n_1\in \Lambda_1,n_2\in\Lambda_2} |G_{\Lambda_1}(j,n_1)| e^{-c|n_1-n_2|} |G_{\Lambda}(n_2,\tilde{N})| \leq CT^6e^{-c\ell^c}.
  \end{equation*}
 This implies  (\ref{smallgreen3}).
  \end{proof}

  \begin{proof}[\bf Proof of Theorem \ref{mainthm2}]
  	This is standard. For any $j$ with $|j|\leq K_1$,
	let
	\begin{equation}\label{g-1}
	a(j,n,T)=\frac{2}{T}\int_{0}^{\infty}e^{-2t/T}|(e^{- itH}\delta_j,\delta_n)|^2 dt.
	\end{equation}
	 By the Parseval formula
	\begin{equation}\label{Par}
	a(j,n,T)=\frac{1}{T\pi} \int_{-\infty}^{\infty}|((H-E-\frac{i}{T})^{-1}\delta_j,\delta_n)|^2 d E.
	\end{equation}
	Recall that $\sigma(H)\subset [-K+1,K-1]$. 
For any $E\in(-\infty,-K)\cup  (K,\infty)$,   $\eta=\text{dist} (E+\frac{i}{T},\text{spec}(H))\geq 1$.
The  well-known Combes-Thomas estimate yields for large $n$,
	\begin{equation}\label{CT}
	|((H-E-\frac{i}{T})^{-1}\delta_j,\delta_n)|\leq Ce^{-c|n|}.
	\end{equation}
By \eqref{Par} and \eqref{CT},  one has that 
	\begin{equation}\label{Par1}
	a(j,n,T)\leq Ce^{-c|n|}+\frac{1}{T\pi} \int_{-K}^{K}|((H-E-\frac{i}{T})^{-1}\delta_j,\delta_n)|^2 d E.
	\end{equation}
	By Lemma \ref{Keyle3},
	we have for any $|E|\leq K$,
	\begin{equation}\label{g-2}
	|((H-E-\frac{i}{T})^{-1}\delta_j,\delta_n)|\leq CT^6 e^{-c\ell^c}\leq CT^6 e^{-c|n|^c}.
	\end{equation}
	By \eqref{Par1} and  \eqref{g-2},  one has that 
	\begin{equation}\label{ant}
	a(j,n,T)\leq CT^{11}e^{-c|n|^c }.
	\end{equation}
	Therefore,
	
	 \begin{eqnarray}
	\langle|X|_{\phi}^p\rangle(T) &\leq & C \sum_{|j|\leq K_1}\sum_{n\in\Z}|n|^pa(j,n,T) \nonumber\\
	&\leq & \sum_{n\in\Z}CT^{11}|n|^p e^{-c|n|^c }\nonumber\\
	&\leq & CT^{11} .
	\end{eqnarray}
	It  implies
	\begin{equation*}
	\beta_{\phi}^+(p)\leq \frac{11}{p}.
	\end{equation*}
	Since   $\beta_{\phi}^+(p)$ are nondecreasing, we have that for every $p>0$,
	\begin{equation}\label{etau1}
	\beta_{\phi}^+(p)\leq \lim_{p\to \infty}\beta_{\phi}^+(p) =0.
	\end{equation}
	
\end{proof}
 \section{Proof of Corollary \ref{thmap1} }
Under the assumption of Corollary  \ref{thmap1}, one has that when
$|\epsilon| \leq \epsilon_0$,  the following  
holds for some $\delta_0>0$,
	\begin{equation}\label{sublinear}
\#\{b\in\Z: |b|\leq N, I_b\text{ does not satisfy } \eqref{smallgreen} \}\leq N^{1-\delta_0}.
	\end{equation} 
This was proved
in Ch. 11 of \cite{bbook} for $E\in\R$ and also holds for complex energies
\cite{liu}\footnote{This is mentioned already in \cite{bbook}}. Therefore,  Corollary \ref{thmap1}  follows from Theorem \ref{mainthm2}.
\qed

\appendix \section{Proof of  Theorem \ref{mainthm1}}
\begin{proof}
By an easy application of H\"older inequality, $\beta_{\phi}^+(q)$ is nondecreasing with respect to $q$.  Therefore, it suffices  to
	show that for any $N\in \mathbb{N}$, $\beta_{\phi}^+(2N)\leq 1$.
	
	Define the free
	long range Schr\"odinger operator by
	\begin{equation*}
	(H_0u)_n =  \sum_{k\in \Z}a_{n-k}u_{k}=\sum_{k\in\Z} a_ku_{n-k}.
	\end{equation*}
	For any sequence  $\gamma=\{\gamma_k\}$ with $|\gamma_k|\leq Ce^{-c|k|}$, we  define the momentum operator $X_{2p}^{\gamma}$:
	\begin{equation*}
	(X_{2p}^\gamma u)_n= n^{p}\sum_k \gamma_ku_{n-k},
	\end{equation*}
	and  $\hat{X}_{2p}^{\gamma}=-i[H_0,X_{2p}^\gamma]$, where $[B_1,B_2]=B_2B_1-B_1B_2$.

Direct computations  implies,
\begin{align}
	(\hat{X}_{2p}^{\gamma}u)_n  &=-i \left(n^p \sum_{k_1\in\Z} \gamma_{k_1} (H_0u)_{n-k_1} -\sum_{k\in\Z} a_k (X_{2p}^\gamma u)_{n-k}\right)\nonumber\\
	&=-i \left(n^p \sum_{k_1\in\Z,k\in\Z} \gamma_{k_1}  a_ku_{n-k_1-k} -\sum_{k\in\Z,k_1\in\Z} a_k (n-k)^p\gamma_{k_1} u_{n-k-k_1}\right)\nonumber\\
	&=-i\sum_{k\in\Z,k_1\in\Z}(n^p-(n-k)^p)a_k\gamma_{k_1}u_{n-k-k_1}\nonumber\\
		&=-i\sum_{m\in \Z} \left(\sum_{k\in\Z}(n^p-(n-k)^p)a_k\gamma_{m-k}\right) u_{n-m}\nonumber
\end{align}
Therefore,    $\hat{X}_{2p}^{\gamma} $  can be rewritten as
	\begin{equation}\label{g4}
	\hat{X}_{2p}^{\gamma}= \sum_{j=0}^{p-1} {X}_{2j}^{\gamma^j},
	\end{equation}
	for some new sequences  $\{\gamma^j_k\}$ with  $|\gamma^j_k|\leq C_je^{-c_j|k|}$, $j=0,1,\cdots,p-1$.

	Let
	\begin{equation*}
	X_{2p}^{\gamma}(t)=e^{itH}  X_{2p}^{\gamma} e^{-itH}, \hat{X}_{2p}^\gamma(t)=e^{itH}  \hat{X}_{2p}^\gamma e^{-itH}.
	\end{equation*}
	
	Differentiating  $ X_{2p}^{\gamma}(t)$,  one has that 
	\begin{equation}\label{g7}
	\frac{dX_{2p}^{\gamma}(t)}{dt}=\hat{X}_{2p}^\gamma(t).
	\end{equation}
	
	We will show inductively that
	\begin{equation}\label{g1}
	(X_{2N}^\gamma (t)\phi,X_{2N}^\gamma(t)\phi)\leq C_{\gamma,\phi,N}t^{2N} \text{ for large } t.
	\end{equation}

	We first prove (\ref{g1}) for $N=1$.
	Differentiating $ X_{2}^\gamma(t)$,  one has that
	\begin{equation}\label{g13}
	\frac{dX_{2}^\gamma(t)}{dt}=\hat{X}_{2}^\gamma(t),
	\end{equation}
	where $\hat{X}_{2}^\gamma(t) $ is a bounded selfadjoint operator by (\ref{g4}).
By \eqref{g7}, one has
	\begin{equation}\label{g5}
	X_{2}^\gamma (t)= X_{2}^\gamma+\int_{0}^t \hat{X}_{2}^\gamma(s)ds.
	\end{equation}
	
	This implies
	$$	(X_{2}^\gamma(t)\phi,X_{2}^\gamma(t)\phi) \;\;\;\;\;\;\;\;\;\; \;\;\;\;\;\;\;\;\;\; \;\;\;\;\;\;\;\;\;\; \;\;\;\;\;\;\;\;\;\; \;\;\;\;\;\;\;\;\;\; \;\;\;\;\;\;\;\;\;\; \;\;\;\;\;\;\;\;\;\; \;\;\;\;\;\;\;\;\;\; \;\;\;\;\;\;\;\;\;\;$$
	\begin{eqnarray*}
	&=& (X_{2}^\gamma\phi+\int_{0}^t \hat{X}_{2}^\gamma(s)\phi ds, X_{2}^\gamma\phi+\int_{0}^t \hat{X}_{2}^\gamma(s)\phi ds) \\
		&\leq &  ||X_{2}^\gamma\phi||^2+ 2||X_{2}^\gamma\phi||\;\;\;  \int_{0}^t|| \hat{X}_{2}^\gamma(s)\phi ||ds + \left(\int_{0}^t ||\hat{X}_{2}^\gamma(s)\phi ||ds\right)^2\\
		&\leq&  C_{\gamma,\phi} t^2+C_{\gamma,\phi}  t+C_{\gamma,\phi} ,
	\end{eqnarray*}
	since $\phi$ has compact support and $\hat{X}_{2}^\gamma(t) $ is  bounded.
	
	Assume that  (\ref{equ1}) holds  for $p\leq N-1$. This means
        that  for any sequence $\{\gamma_k\}$ and $p=1,2,\cdots N-1$,
	\begin{equation}\label{g6}
	({X}_{2p}^{\gamma}(t)\phi,{X}_{2p}^\gamma(t)\phi)\leq C_{\gamma,\phi,p}t^{2p} \text{ for large } t.
	\end{equation}

	By (\ref{g7}), one has
	\begin{equation}\label{g8}
	X_{2N}^{\gamma}(t)= X_{2N}^{\gamma}+\int_{0}^t \hat{X}_{2N}^\gamma(s)ds.
	\end{equation}
	
	By (\ref{g4}) and (\ref{g6}),
	we have
	\begin{equation}\label{g10}
	||\hat{X}_{2N}^\gamma(t)\phi||\leq C_{\gamma,\phi,N} t^{N-1} \text{ for large } t.
	\end{equation}
	This implies, for large $t$,
	$$	(X_{2N}^\gamma(t)\phi,X_{2N}^\gamma(t)\phi) \;\;\;\;\;\;\;\;\;\; \;\;\;\;\;\;\;\;\;\; \;\;\;\;\;\;\;\;\;\; \;\;\;\;\;\;\;\;\;\; \;\;\;\;\;\;\;\;\;\; \;\;\;\;\;\;\;\;\;\; \;\;\;\;\;\;\;\;\;\; \;\;\;\;\;\;\;\;\;\; \;\;\;\;\;\;\;\;\;\;$$
	\begin{eqnarray*}
	&=& \left(X_{2N}^\gamma\phi+\int_{0}^t \hat{X}_{2N}^\gamma(s)\phi ds, X_{2N}^\gamma\phi+\int_{0}^t \hat{X}_{2N}^\gamma(s)\phi ds\right) \\
		&\leq &  ||X_{2N}^\gamma\phi||^2+ 2||X_{2N}^\gamma\phi||\;\;\;  \int_{0}^t|| \hat{X}_{2N}^\gamma(s)\phi ||ds + \left(\int_{0}^t ||\hat{X}_{2N}^\gamma(s)\phi ||ds\right)^2\\
		&\leq&  C_{\gamma,\phi,N} t^{2N}.
	\end{eqnarray*}
	
	Let  $\{\gamma_k\}$ be the sequence such that $\gamma_0=1$ and $\gamma_k=0$ for $k\neq 0$.
	Therefore, one has
	\begin{eqnarray}
		\sum_{n\in\Z} |n|^{2N}|(e^{-itH}\phi,\delta_n)|^2&=&  (X_{2N}^\gamma e^{-itH}\phi,X_{2N}^\gamma e^{-itH}\phi) \nonumber\\
		&=& (e^{itH}X_{2N}^\gamma e^{-itH}\phi,e^{itH}X_{2N}^\gamma e^{-itH}\phi)\nonumber \\
		&=& (X_{2N}^\gamma(t)^\gamma\phi,X_{2N}^\gamma(t)\phi) \nonumber\\
		&\leq &C_{\phi,N} t^{2N}.\label{g11}
	\end{eqnarray}
	
	By (\ref{mom}) and \eqref{g11}, one has
	\begin{eqnarray*}
		\langle|\hat{X}|_{\phi}^{2N}\rangle(T) &\leq & \frac{2}{T}\int_0^\infty e^{-2t/T}C_{\phi,N} t^{2N}dt \\
		&\leq &  C_{\phi,N} T^{2N}.
	\end{eqnarray*}
	
	Thus    $\beta_{\phi}^+(q)\leq1$ for any $q>0$.
\end{proof}
 \section*{Acknowledgments}
This paper is devoted to the memory of Jean Bourgain. 
Both authors have been profoundly influenced by Jean. S.J. was
fortunate to experience direct influence, see
\cite{notices}. W.L. believes he became a
mathematician through detailed  reading of \cite{bbook}. \\ We are grateful to the Isaac Newton Institute for Mathematical Sciences,
Cambridge, for its hospitality, supported by EPSRC Grant Number EP/K032208/1, during
the programme Periodic and Ergodic Spectral Problems where this work
was started. S.J. was supported by a Simons Foundation Fellowship.
Her work was also supported by  NSF DMS-1901462 and  DMS-2052899. W.L. was supported by the 
NSF DMS-2000345 and DMS-2052572.

\end{document}